\newtheorem{observation}{Observation}
\begin{document}
	\title{Paired Disjunctive Domination Number of Middle Graphs}
	
	\address{Dokuz Eylul University, Maritime Faculty, Tinaztepe Campus, 35390, Buca, Izmir, Turkey}
	
	\author{Hande Tuncel Golpek
		\\
		Maritime Faculty \\
		Dokuz Eylul University \\ Tinaztepe Campus, 35390, Buca, Izmir, Turkey\\
		hande.tuncel{@}deu.edu.tr
		\and Zeliha Kartal Yildiz
		\\
		Computer Programing Department  \\
		Izmir Kavram Vocational School,
		Izmir Turkey  
\and Aysun Aytac
\\
Department of Mathematics  \\
Ege University,
Izmir Turkey } 
		\maketitle
	
\runninghead{H. Tuncel Golpek, Z. Kartal Yildiz, A. Aytac}{Paired Disjunctive Domination Number of Middle Graphs}
	\begin{abstract}
		The concept of domination in graphs plays a central role in understanding structural properties and applications in network theory. In this study, we focus on the paired disjunctive domination number in the context of middle graphs, a transformation that captures both adjacency and incidence relations of the original graph. We begin by investigating this parameter for middle graphs of several special graph classes, including path graphs, cycle graphs, wheel graphs, complete graphs, complete bipartite graphs, star graphs, friendship graphs, and double star graphs. We then present general results by establishing lower and upper bounds for the paired disjunctive domination number in middle graphs of arbitrary graphs, with particular emphasis on trees. Additionally, we determine the exact value of the parameter for middle graphs obtained through the join operation. These findings contribute to the broader understanding of domination-type parameters in transformed graph structures and offer new insights into their combinatorial behavior.	
	\end{abstract}
	\begin{keywords}
		Paired disjunctive domination, middle graphs, trees, graph operations.	\end{keywords}

	\section{Introduction}
Graph theory, a cornerstone of discrete mathematics, offers a robust framework for modeling and analyzing intricate systems. Its wide-ranging applications include domains such as computer science, biology, transportation, and the social sciences, where graphs provide an intuitive means of representing real-world structures—vertices symbolize individual elements, while edges denote interactions or connections among them. Due to this representational power, graph theory has become a crucial tool in addressing complex challenges, from optimizing communication infrastructures to understanding chemical compound structures.

Within this vast field, domination theory has emerged as a particularly rich area of study. It explores how specific subsets of vertices can exert control or influence over the entire graph. Domination-related parameters play a critical role in both theoretical investigations and practical scenarios, including network defense, facility deployment, and efficient allocation of limited resources. A noteworthy advancement in this area is the paired disjunctive domination parameter—a more recent refinement that introduces an element of redundancy and robustness into domination strategies. This parameter determines the smallest number of vertex pairs needed such that every vertex in the graph either belongs to one of the selected pairs or is adjacent to at least one vertex from a pair. Such a framework is highly relevant for contexts requiring reliability and fault tolerance, including resilient communication networks and strategic emergency response planning.

A practical example of paired disjunctive domination can be observed in urban emergency response systems. Consider a city where ambulance stations must be strategically positioned to ensure prompt medical assistance in case of emergencies. By placing ambulances at designated locations and ensuring that each region is directly served or adjacent to a station, city planners can achieve efficient coverage while maintaining backup support in case of service disruptions. This type of redundancy is crucial in optimizing response times and enhancing overall service reliability.

A fundamental challenge in modern network design is optimizing the balance between resource allocation and redundancy. Since critical resources are often costly, deploying them uniformly across an entire network is impractical. Moreover, the potential failure of resources at specific nodes necessitates redundancy and backup mechanisms, which, while essential for network reliability, impose additional resource demands. This problem has been extensively studied through graph-theoretic models, wherein researchers identify strategically positioned vertex subsets that maintain network connectivity while satisfying redundancy constraints.

Domination theory, including its extensions such as total domination, disjunctive total domination, paired domination, and more recently, paired disjunctive domination, forms a well-established area in graph theory with numerous applications and a rich body of literature \cite{dom1,dom2,totaldom2,totaldom3,HenningPDD}. Foundational surveys by Haynes et al. \cite{dom1,dom2} provide comprehensive insights into these parameters and their theoretical significance.

Given a graph $G=(V,E)$, a subset $S\subseteq V$ is defined as a \textit{dominating set} if every vertex in $V-S$ has at least one neighbor in $S$. \textit{The domination number}, denoted $\gamma(G)$, represents the minimum cardinality of a dominating set in $G$. A subset $S\subseteq V$ is a \textit{total dominating set (TDS)} if every vertex in $V$ is adjacent to at least one vertex in $S$, and \textit{the total domination number}, $\gamma_t(G)$, is the minimum cardinality of a TDS. 

\textit{A matching} in a graph is a collection of pairwise non-adjacent edges, and \textit{a perfect matching} is one that includes every vertex exactly once. \textit{A paired dominating set (PDS)} is a dominating set $S$ for which the subgraph induced by $S$, denoted $G[S]$, has a perfect matching. This ensures that every selected vertex has a designated partner within the set. It is known that any graph without isolated vertices admits a PDS, since the endpoints of a maximal matching satisfy this condition. The paired domination number, $\gamma_{pr}(G)$, represents the minimum size of such a set. This concept was originally introduced by Haynes and Slater \cite{paireddom,paireddom2} in the context of modeling redundancy in security systems, where guards are deployed in mutually supportive pairs.

The concept of \textit{the disjunctive domination} was introduced by Goddard et al. in 2014 as a variation of original domination concept \cite{disjunctivedom}. In a graph $G$, a subset $D\subseteq V$ is called a \textit{b-disjunctive dominating set (bDD-set)} if every vertex not in $D$ either has at least one neighbor in $D$ or has at least $b$ vertices in $D$ at distance two. When $b=2$, the set is referred to as a \textit{disjunctive dominating set (2DD-set)}, and the minimum such set defines \textit{the disjunctive domination number} $\gamma_d(G)$. Various studies have explored algorithmic aspects and bounds of disjunctive domination, including its total version, where the induced subgraph $G[D]$ contains no isolated vertices.
A subset $D\subseteq V$ in an isolate-free graph $G=(V,E)$ is called a \textit{paired disjunctive dominating set (PDD-set)} if it satisfies the following conditions:
\begin{itemize}
\item	D is a disjunctive dominating set, meaning every vertex in $V-D$ has at least one neighbor in D or at least two vertices in D at distance two.
\item	The subgraph $G[D]$ contains a perfect matching, ensuring that every vertex in D is paired with another vertex via an edge.
\end{itemize}

\textit{The paired disjunctive domination number}, denoted as $\gamma_{pr}^d(G)$, represents the minimum cardinality of a \textit{PDD-set} in $G$. A \textit{PDD-set} of cardinality $\gamma_{pr}^d(G)$ is called a $\gamma_{pr}^d$-set \cite{HenningPDD}.

We consider \textit{the middle graph}, a fundamental structural transformation of a given graph that plays a key role in various theoretical and applied settings in this study. The middle graph $M(G)$ of a graph $G$ is obtained by introducing a new vertex for each edge of $G$ and establishing edges between these new vertices whenever their corresponding edges in $G$ share a common vertex. The middle graph $M(G)$ is a well-studied transformation that integrates both the vertex and edge structures of a graph by creating a hybrid representation whose vertex set is $V(G)\cup E(G)$. This construction is not merely a technical modification; it plays a fundamental role in several areas of graph theory. First, middle graphs are closely related to classical graph transformations such as line graphs, total graphs, and splitting graphs, and this relationship enables the transfer of structural and combinatorial properties---such as traversability, connectivity, matching behaviour, and spectral characteristics---across different graph models. Second, middle graphs have been employed in chemical graph theory and the computation of vertex-based topological indices, where the combined vertex--edge representation captures atom--bond interactions more faithfully than standard graph models. Third, a growing body of contemporary research has investigated middle graphs in connection with domination parameters, spectral indices (including nullity and energy), and matching enumeration, demonstrating that they form a rich and flexible framework for analysing refined graph invariants. Within this context, studying the paired disjunctive domination number of middle graphs is a natural and meaningful extension of existing work, as it brings together a fundamental structural transformation with a domination parameter that is sensitive to both local and global connectivity properties \cite{Hamada,Lai,Shirkol,Betul,Kim,Sardara}.

In this paper, we investigate the paired disjunctive domination number of middle graphs. Specifically, we establish lower and upper bounds for this parameter in terms of the order of the graph G. Furthermore, we determined the parameter's value for the middle graph of graphs obtained under certain graph operations. In addition, we explicitly determine the paired disjunctive domination number for various graph families, including special graphs, double star graphs, and friendship graphs. 


The vertex set of middle graph $M(G)$ is defined as $V(M(G))=V(G)\cup %
SD,$ where $V(G)=\{i:1\leq i\leq n\}$ and $SD=\{v_{ij}:$ $%
ij\in E(G)\}$ (or in order to get rid of notational burden in some proofs we can use the set notation $SD=\{u_{t}: t\in {1,\dots, \lvert E(G) \rvert }\}$) denotes the set of new vertices, called as subdivision vertex, corresponding to the edges of $G$. For any two vertices $i,j\in V(G)$, if they are adjacent then the edge $ij\in E(G)$ is represented by a new vertex $v_{ij}$ in the middle graph $M(G).$  Therefore, edge set can be described as $E(M(G))=\{iv_{ij},v_{ij}j:ij\in E(G)\}\cup E(L(G))$, where $L(G)$ is the line graph of $G$ (see in Figure \ref{1}). The line graph $L(G)$ of a graph $G$ is defined so that each vertex of $L(G)$ corresponds to an edge of $G$, and two vertices of $L(G)$ are adjacent whenever their corresponding edges in $G$ share a common endpoint. 
\begin{figure}[ht]
	\centering
	\includegraphics[width=0.6\textwidth]{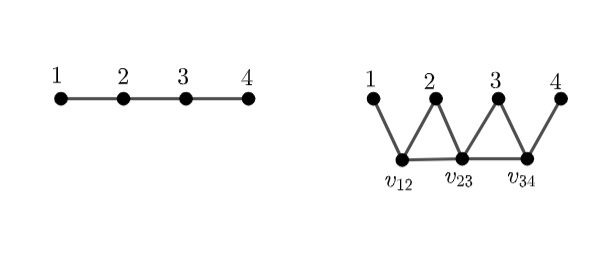}
	\caption{Path graph $P_{4}$ and its middle graph of $M(P_{4})$}
	\label{1}
\end{figure}
\section{Preliminaries}
In this paper, we consider simple, finite, and undirected graphs. The concepts introduced in this section are provided to support the statements of later theorems and to ensure the clarity of the arguments developed throughout the manuscript.

A graph $G=(V,E)$ consists of a nonempty set $V$ of vertices and a set $E$ of unordered pairs of distinct vertices. A graph is called connected if every pair of vertices is joined by a path. A connected graph containing no cycles is referred to as a tree. Trees play a central role in domination-type parameters, and several structural notions related to trees will be used in subsequent sections.
Let $T$ be a tree. A vertex of degree one in $T$ is called a leaf. The set of all leaves of $T$ is denoted by 
$Leaf(T)$, and its cardinality, $\lvert Leaf(T)\rvert$, gives the total number of leaf vertices in the tree. If $v$ is a one-degree  vertex (pendant vertex), then its unique neighbor is called a support vertex. A support vertex adjacent to at least two pendant vertices is designated as a strong support vertex. These notions are particularly important in domination theory and its variants, where leaves and support vertices often determine extremal configurations.

The \emph{open neighborhood} of a vertex $v$ is defined as $N(v)=\{\,u\in V : uv\in E\,\},$
while the \emph{closed neighborhood} of $v$ is given by
$
N[v]=\{v\}\cup N(v).
$
For a general graph $G$, the degree of a vertex $v$, denoted 
$deg(v)$, is defined as the number of edges incident with 
$v$. Two vertices $u$ and $v$ are said to be adjacent if 
$uv\in E$. Adjacency relations and degree properties will be used repeatedly while establishing local structural constraints in our proofs. We also refer to readers \cite{Chartrand} for all other fundamental graph terms and notations. 

We also recall several specific graph families that will appear in later sections:

The double star graph $D_{n,m}$ where $n \geq m \geq 0$, is the graph obtained by taking the union of two stars $K_{1,n}$ and $K_{1,m}$ and adding an edge between their centers \cite{Grossman}.

A graph obtained by taking $n$ copies of the cycle $C_3$ such that all cycles share a common vertex is called a \emph{friendship graph} \cite{Gallian}. 

The join operation will later be used to describe certain constructions related to our paired disjunctive domination results \cite{Chartrand}.

\section{Known Results}
In this section, we provide some fundamental results related to the paired disjunctive domination parameter, which was introduced by Henning \cite{HenningPDD} and further developed by Golpek and Aytac \cite{Honam, HTGAAShadow}. This will facilitate the flow of the subsequent parts of the paper.

\begin{observation}\cite{HenningPDD}\label{Observation1} For an
isolate-free graph $G,$ $\gamma ^{d}(G)\leq \gamma _{pr}^{d}(G)\leq \gamma_{pr}(G)$, and $\gamma ^{d}(G)\leq \gamma _{pr}^{d}(G)\leq 2\gamma
^{d}(G).$\end{observation}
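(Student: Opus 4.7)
The plan is to treat the chain $\gamma^d(G) \le \gamma_{pr}^d(G) \le \gamma_{pr}(G)$ and the separate bound $\gamma_{pr}^d(G) \le 2\gamma^d(G)$ one at a time, since the first two inclusions are essentially definitional while the last inequality is the one that needs an actual construction.

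\textbf{Step 1: the two definitional containments.} I would first handle $\gamma^d(G) \le \gamma_{pr}^d(G)$ by observing that every PDD-set is, by definition, already a DD-set; taking a minimum PDD-set therefore yields an upper bound on the minimum DD-set size. For $\gamma_{pr}^d(G) \le \gamma_{pr}(G)$, I would note that an ordinary dominating set is automatically a disjunctive dominating set (the clause ``$v$ has a neighbor in $S$'' is shared by both definitions), so any paired dominating set is already a paired disjunctive dominating set, and the inequality follows by taking a minimum PDS.

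\textbf{Step 2: doubling a $\gamma^d$-set.} For the nontrivial bound $\gamma_{pr}^d(G)\le 2\gamma^d(G)$, I would start with a $\gamma^d$-set $D$ and try to inflate $D$ to a set $D^{*}\supseteq D$ of size at most $2|D|$ such that $G[D^{*}]$ admits a perfect matching. Because adding vertices only makes the disjunctive domination conditions easier to satisfy, $D^{*}$ remains a DD-set, and the perfect matching certifies that $D^{*}$ is paired; hence $D^{*}$ is a PDD-set and the claimed upper bound follows.

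\textbf{Step 3: building the matching.} To construct $D^{*}$, I would first take a maximal matching $M$ in the induced subgraph $G[D]$ and let $U\subseteq D$ be the vertices unsaturated by $M$; maximality of $M$ forces $U$ to be independent in $G$. For each $v\in U$, since $G$ has no isolated vertices, $v$ has at least one neighbor in $V(G)$; I would add one such neighbor $v'$ to $D^{*}$ and use $vv'$ as the matching edge covering $v$. Combining the edges of $M$ with these newly introduced edges yields the required perfect matching on $D^{*}$, and at most one new vertex is introduced per element of $U \subseteq D$, so $|D^{*}|\le |D|+|U|\le 2|D|$.

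\textbf{Main obstacle.} The delicate part is ensuring that the partners $v'$ can be chosen \emph{consistently}: two distinct vertices of $U$ might compete for the same neighbor, or the only neighbors of some $v\in U$ could already be spoken for. This is the same technical difficulty that appears in the classical argument for $\gamma_{pr}(G)\le 2\gamma(G)$, and I would resolve it by an augmenting-style rerouting, namely, whenever a conflict arises, swap a matched edge of $M$ for a new pendant partner drawn from $V(G)\setminus D$, noting that each swap strictly decreases the number of unsaturated vertices and increases $|D^{*}|$ by at most one. Alternatively, one can invoke the standard lemma that in any isolate-free graph every vertex subset $S$ can be extended by at most $|S|$ additional vertices into a set whose induced subgraph has a perfect matching. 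Either route yields $|D^{*}|\le 2|D|=2\gamma^d(G)$ and completes the proof.
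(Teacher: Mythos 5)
The paper does not prove this statement; it is quoted as a known observation from \cite{HenningPDD}, so your proposal can only be judged on its own merits. Your Step 1 is correct and complete: a PDD-set is by definition a disjunctive dominating set, giving $\gamma^d(G)\le\gamma_{pr}^d(G)$, and every dominating set is a disjunctive dominating set, so every paired dominating set is a PDD-set, giving $\gamma_{pr}^d(G)\le\gamma_{pr}(G)$. Step 2's framework is also sound: any superset of a disjunctive dominating set is again a disjunctive dominating set, so it suffices to extend a $\gamma^d$-set $D$ to a set of size at most $2|D|$ whose induced subgraph has a perfect matching.

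The gap is in Step 3, and it is exactly the obstacle you flag but do not resolve. The ``standard lemma'' you propose to invoke is false as stated: in $G=P_3=a\,b\,c$, the set $S=\{a,c\}$ cannot be extended by at most $|S|$ vertices to a set inducing a perfect matching, since any superset of $\{a,c\}$ forces both $a$ and $c$ to be matched to $b$. The same configuration (two vertices of $D$ whose only neighbour is a common vertex) defeats the augmenting/rerouting scheme as well, because the swap you describe merely transfers the unsaturated vertex back and forth and never terminates in a matching. What saves the theorem is the \emph{minimality} of $D$, which you never use: for instance, if a disjunctive dominating set contained two leaves with a common support $b$, replacing both by $b$ would yield a strictly smaller disjunctive dominating set, so a $\gamma^d$-set cannot contain such a pair. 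A correct proof must make this kind of argument explicit --- either by showing that every unsaturated vertex of a \emph{minimum} disjunctive dominating set always has an available partner (possibly after exchanging a vertex of $D$ for a neighbour, which preserves the disjunctive domination property in the relevant cases), or by establishing an analogue of the Bollob\'as--Cockayne private-neighbour lemma for disjunctive domination. As written, the key inequality $\gamma_{pr}^d(G)\le 2\gamma^d(G)$ is not established.
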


\begin{observation}\cite{Honam} For an isolate-free graph $G,$ $%
\gamma _{t}^{d}(G)\leq \gamma _{pr}^{d}(G).$
\end{observation}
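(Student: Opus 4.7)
The plan is to prove this inequality by exhibiting a single set that witnesses both parameters: take a minimum paired disjunctive dominating set and verify that it already qualifies as a total disjunctive dominating set. Since $\gamma_t^d(G)$ is defined as the minimum cardinality of such a set, the bound will follow immediately.

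First I would fix a $\gamma_{pr}^d$-set $D$ of $G$, so $|D|=\gamma_{pr}^d(G)$. By the definition of a PDD-set recalled in the introduction, $D$ is a disjunctive dominating set of $G$: every vertex in $V-D$ has a neighbor in $D$ or at least two vertices of $D$ at distance two. This already gives the disjunctive-domination requirement in the definition of a total disjunctive dominating set. What remains is the ``no isolated vertex in $G[D]$'' requirement.

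Next I would exploit the perfect matching condition built into a PDD-set. Let $M$ be a perfect matching of $G[D]$. Every $v\in D$ is $M$-matched to some vertex $u\in D$, so the edge $uv$ lies in $G[D]$ and $v$ is not isolated in $G[D]$. Thus $D$ satisfies both defining conditions of a total disjunctive dominating set, and we conclude $\gamma_t^d(G)\leq |D|=\gamma_{pr}^d(G)$.

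The argument has no real obstacle; the only point to be careful about is the definitional one that $\gamma_t^d$ and $\gamma_{pr}^d$ are both well-defined precisely under the hypothesis that $G$ is isolate-free, which is exactly what the statement assumes. The inequality is therefore a direct structural consequence of the fact that the existence of a perfect matching in $G[D]$ immediately precludes isolated vertices in $G[D]$, so no further combinatorial work is needed.
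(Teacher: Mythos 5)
Your argument is correct and is the standard one: a perfect matching in $G[D]$ forces every vertex of $D$ to have a neighbor in $D$, so any PDD-set is automatically a total disjunctive dominating set, which yields the inequality. The paper states this observation as a cited result without reproducing a proof, and your reasoning is precisely the direct argument that underlies it.
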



\begin{observation}\cite{HTGAAShadow}
\label{Shadow2.7} If $T$ is a tree and $D$ is any \textit{PDD-set}, then every support vertex of $T$ is included in $D$ with a its neighbour or $D$ may contains at least two
neighbours of the support vertices with their neighbours.
\end{observation}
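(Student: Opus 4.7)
The plan is to fix an arbitrary support vertex $v$ of $T$, pick a leaf neighbour $u$ of $v$, and split the analysis according to how $D$ disjunctively dominates $u$. The key structural input is that in a tree the only vertices within distance two of a leaf $u$ are $v$ itself and the neighbours of $v$ different from $u$; every disjunctive domination route for $u$ must therefore pass through this small neighbourhood.

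First I would handle the case $u \in D$. Since $G[D]$ admits a perfect matching and the only neighbour of $u$ in $T$ is $v$, the partner of $u$ in the matching must be $v$. This places $v$ in $D$ together with its neighbour $u$, which is the first alternative of the statement.

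Next I would treat $u \notin D$. If $u$ has a neighbour in $D$, that neighbour is forced to be $v$; hence $v \in D$ and, being matched in $G[D]$, it is paired with some neighbour that also lies in $D$, again giving the first alternative. Otherwise, the disjunctive condition demands at least two vertices of $D$ at distance two from $u$, and by the structural remark above these two vertices must be neighbours of $v$ different from $u$. Because they belong to $D$, the perfect matching in $G[D]$ assigns each of them a partner inside $D$ that is adjacent to it, producing the configuration of at least two neighbours of the support vertex accompanied by their matched neighbours.

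The main obstacle I anticipate is making the distance-two step precise: one must rule out any long-range route for disjunctively dominating a leaf of a tree, and then verify that each of the two forced distance-two vertices really drags a partner into $D$ rather than being matched with something that remains outside. Once this is pinned down, combining the three cases yields the observation directly from the definitions of a PDD-set and of a perfect matching.
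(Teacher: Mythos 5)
Your argument is correct and complete: since $u$ is a leaf, every domination route for $u$ passes through $v$ or through the neighbours of $v$, and the perfect-matching condition forces each vertex of $D$ you identify to carry an adjacent partner inside $D$ (note that in a tree two neighbours of $v$ are never adjacent, so the two distance-two vertices in your last case each bring their own partner). The paper states this observation as a known result imported from \cite{HTGAAShadow} and gives no proof of it, so there is nothing internal to compare against; your leaf-based case analysis is the natural and standard way to establish it.
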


\begin{theorem}
\cite{HenningPDD}\label{special} For paired disjunctive domination number of some special graphs:
\begin{itemize}
 \item[i)]  $\gamma _{pr}^{d}(C_{n})=2\left\lceil 
	\frac{n}{5}\right\rceil$ with $n\geq 3,$ \\

\item[ii)]  $\gamma _{pr}^{d}(P_{n})=2\left\lceil \frac{n+1}{5}%
	\right\rceil$ with $n\geq 2,$
    \item[iii)]  $\gamma _{pr}^{d}(K_{n})= \gamma _{pr}^{d}(K_{m,n})=2$ with $n\geq 2$.
\end{itemize}
\end{theorem}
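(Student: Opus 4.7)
\emph{Proof proposal.} I would address the three parts independently, since each rests on distinct ideas. Part (iii) is immediate: for $K_n$ with $n\ge 2$, take any edge $\{u,v\}$; the induced subgraph is a single matched edge and every remaining vertex is adjacent to both $u$ and $v$, so $\{u,v\}$ is a PDD-set and $\gamma_{pr}^d(K_n)\le 2$. The matching requirement forces $|D|\ge 2$ in any isolate-free graph, giving equality. The same reasoning works for $K_{m,n}$ after selecting one vertex from each part.

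For the cycle in part (i), I would separate upper and lower bounds. \emph{Upper bound:} label the vertices cyclically $v_0,\ldots,v_{n-1}$, write $n=5q+r$ with $0\le r<5$, and place the pairs $\{v_{5i},v_{5i+1}\}$ for $0\le i\le q-1$. When $r=0$ these pairs alone form a PDD-set: each gap of three consecutive non-$D$ vertices has its two outer vertices adjacency-dominated and its middle vertex at distance two from both flanking pairs. For $r\in\{1,2,3,4\}$ I would add one extra pair covering the residual arc and verify each residue case. \emph{Lower bound:} I would prove that every window $W$ of five consecutive vertices in $C_n$ contains at least two members of any PDD-set $D$. If $W\cap D=\emptyset$, the middle vertex of $W$ has neither an adjacent $D$-vertex nor a distance-two $D$-vertex, a contradiction. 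If $|W\cap D|=1$ at an interior position of $W$, the matching partner of that vertex lies inside $W$, forcing $|W\cap D|\ge 2$. If the single $D$-vertex is at an extreme position of $W$, its partner sits outside $W$; but then the vertex two steps inside $W$ has only one distance-two $D$-vertex and no adjacent one, again a contradiction. Summing $|W\cap D|\ge 2$ over all $n$ cyclic windows (each vertex appearing in exactly five of them) yields $5|D|\ge 2n$, so $|D|/2\ge n/5$, and integrality of $|D|/2$ promotes this to $|D|\ge 2\lceil n/5\rceil$.

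Part (ii) follows the same template with boundary corrections. The construction places pairs at positions $(2,3),(7,8),\ldots$ and adjusts the trailing pair according to $n\bmod 5$; the leading pair adjacency-dominates $v_1$ and $v_4$, and successive pairs sit five apart to exploit the distance-two relation. The window argument applied verbatim gives $|D|\ge 2(n-4)/5$, but this alone is too weak. The sharp bound comes by combining it with the endpoint constraints: since $v_1$ has only the single distance-two vertex $v_3$, it cannot be disjunctively dominated at distance two, forcing $v_1\in D$ or $v_2\in D$, and symmetrically at $v_{n-1},v_n$. These extra constraints produce the "$+1$" inside the ceiling, which I would formalize by weighting the boundary windows in the averaging or by comparing $P_n$ to an auxiliary structure of effective length $n+1$.

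The main obstacle is the lower-bound case analysis, particularly ruling out a 5-window containing a single $D$-vertex at its boundary whose matching partner lies just outside the window; this requires using both the matching condition on the boundary vertex and the distance-two constraint on an interior vertex of the window. In the path case, the extra subtlety is quantifying how the endpoint constraints precisely raise the count from $\lceil n/5\rceil$ in the cycle to $\lceil(n+1)/5\rceil$ in the path, rather than producing a looser bound such as $\lceil (n-4)/5 \rceil + O(1)$.
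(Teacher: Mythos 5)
The paper quotes this theorem from \cite{HenningPDD} as a known result and gives no proof of its own, so there is no in-paper argument to compare yours against; I can only judge the proposal on its merits. Parts (iii) and (i) are essentially sound. The two-vertex argument for $K_n$ and $K_{m,n}$ is complete. Your cycle lower bound --- every window of five consecutive vertices meets a PDD-set $D$ in at least two vertices, hence $5|D|\geq 2n$, and evenness of $|D|$ upgrades $|D|/2\geq n/5$ to $|D|\geq 2\lceil n/5\rceil$ --- is correct for $n\geq 5$ (your case analysis for a lone boundary $D$-vertex does work: if $W=\{w_1,\dots,w_5\}$ and $W\cap D=\{w_1\}$, then $w_3$ has no neighbour in $D$ and only $w_1$ at distance two). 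You should still say explicitly that $n=3,4$ are handled by the trivial observation that any edge dominates, and actually carry out the deferred residue checks for the upper bound, but these are routine.

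The genuine gap is the lower bound in part (ii), which you flag but do not close. Asserting that the endpoint constraints ``produce the $+1$ inside the ceiling'' and naming two possible formalizations (reweighting boundary windows, or an auxiliary structure of length $n+1$) is a plan, not a proof, and the unmodified window count really does stall at $2(n-4)/5$. The clean repair is to drop windows for the path and argue on the matched pairs directly: since $G[D]$ has a perfect matching and the only edges of $P_n$ join consecutive vertices, $D$ is a disjoint union of $r$ blocks $\{v_{a_j},v_{a_j+1}\}$ with $a_1<a_2<\cdots<a_r$ and $|D|=2r$. Domination of $v_1$ forces $a_1\leq 2$ and domination of $v_n$ forces $a_r\geq n-2$ (each endpoint has a unique vertex at distance two, so it cannot be disjunctively dominated from distance two alone). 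If two consecutive blocks left a gap of four or more vertices, the second vertex of that gap would have no neighbour in $D$ and exactly one $D$-vertex at distance two, so $a_{j+1}-a_j\leq 5$ for every $j$. Chaining these inequalities gives $n-2\leq a_r\leq 2+5(r-1)$, hence $r\geq (n+1)/5$ and, by integrality, $|D|\geq 2\lceil (n+1)/5\rceil$. Until a computation of this kind is written out, part (ii) remains unproved.
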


\begin{theorem}
	\cite{HTGAAShadow}\label{buyuk2} Let $G$ be isolated-free graph, then $2\leq \gamma _{pr}^{d}(G)\leq n.
	$
\end{theorem}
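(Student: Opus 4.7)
The plan is to split the double inequality into its two halves, each of which reduces to a short argument from the definitions together with Observation~\ref{Observation1}.

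For the lower bound $\gamma_{pr}^{d}(G)\geq 2$, I would argue directly from the definition of a PDD-set. Any PDD-set $D$ must induce a subgraph $G[D]$ admitting a perfect matching, which forces $|D|$ to be even. Moreover $D$ cannot be empty: since $G$ is isolate-free it has at least two vertices, and any vertex outside $D$ must be dominated (either by a neighbour in $D$ or by two vertices of $D$ at distance two), which is impossible if $D=\emptyset$. Hence $|D|\geq 2$, and so $\gamma_{pr}^{d}(G)\geq 2$.

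For the upper bound $\gamma_{pr}^{d}(G)\leq n$, I would appeal to Observation~\ref{Observation1}, which gives $\gamma_{pr}^{d}(G)\leq \gamma_{pr}(G)$, and then show $\gamma_{pr}(G)\leq n$ via a standard maximal matching argument. Let $M$ be a maximal matching of $G$ (such an $M$ exists because $G$ has at least one edge, being isolate-free) and let $V(M)$ denote the set of endpoints of edges of $M$. For any vertex $v\notin V(M)$, the maximality of $M$ forces every neighbour of $v$ to lie in $V(M)$; since $v$ has at least one neighbour (as $G$ is isolate-free) that neighbour is in $V(M)$. Thus $V(M)$ is a dominating set of $G$, the edges of $M$ provide a perfect matching on $V(M)$, and $|V(M)|=2|M|\leq n$. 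Combining, $\gamma_{pr}^{d}(G)\leq \gamma_{pr}(G)\leq 2|M|\leq n$.

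The argument is a direct combination of the definitions, Observation~\ref{Observation1}, and the classical maximal matching technique, so no serious obstacle is anticipated. The only minor subtlety is justifying that each vertex outside $V(M)$ has a neighbour inside $V(M)$, which rests simultaneously on maximality of $M$ and on the isolate-free assumption. Everything else is routine bookkeeping.
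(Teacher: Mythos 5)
Your proof is correct. Note that the paper states this theorem as a known result imported from \cite{HTGAAShadow} and gives no proof of its own, so there is nothing internal to compare against; your argument (parity of a set inducing a perfect matching plus nonemptiness for the lower bound, and Observation~\ref{Observation1} combined with the endpoints of a maximal matching for the upper bound) is the standard one and is exactly the reasoning the paper itself alludes to in the introduction when it remarks that the endpoints of a maximal matching form a paired dominating set.
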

\section{Results about Special Graph Structures}
In this section, we investigate the paired disjunctive domination parameter in middle graphs of several special graph classes.
\begin{theorem}\label{genelteomyiddle}
	Let $G$ be a connected graph and let $M(G)$ be its middle graph. Suppose that there exists a vertex $u\in V(G)$ and two distinct edges $uv_1,uv_2\in E(G)$ such that for corresponding subdivision vertices $x_1$ and $x_2$ of $M(G)$ , every vertex $z\in V(M(G))\setminus \{x_1,x_2\}$ is either adjacent to at least one of $x_1$ and $x_2$ or satisfies $d_{M(G)}(z,x_1)=d_{M(G)}(z,x_2)=2$. Then $\gamma_{pr}^d(M(G))=2.$
\end{theorem}
\begin{proof}
	Let $x_1$ and $x_2$ be the subdivision vertices corresponding to the edges $uv_1$ and $uv_2$. Since these edges share the vertex $u$, the vertices $x_1$ and $x_2$ are adjacent in $M(G)$ and thus induce a perfect matching. By assumption, every vertex of $M(G)$ outside $\{x_1,x_2\}$ is either adjacent to $x_1$ or $x_2$ or is at distance two from both, which means that $\{x_1,x_2\}$ is a paired disjunctive dominating set of $M(G)$. Therefore, $\gamma_{pr}^d\leq 2$ is obtained. Furthermore, by Theorem~2.2, it is known that the paired disjunctive domination number is at least \(2\).
	That is,
	$
	\gamma_{pr}^d(M(G)) \ge 2.
	$
	Consequently, we conclude that
	$
	\gamma_{pr}^d(M(G)) = 2.
	$
	
\end{proof}

\begin{proposition}\label{n-1dereceli}
	Let $G$ be a graph on $n$ vertices without isolated vertices. 
	Suppose the maximum degree of $G$ satisfies $\Delta(G) = n-1$, then
	\[
	\gamma_{pr}^d(M(G)) = 2.
	\]
\end{proposition}
\begin{proof}
	Let $u$ be a vertex of $G$ with $\deg_G(u)=n-1$, and choose two distinct neighbors 
	$v_1$ and $v_2$ of $u$. Denote by $x_1$ and $x_2$ the subdivision vertices of $M(G)$ 
	corresponding to the edges $uv_1$ and $uv_2$, respectively. Using the fact that $u$ is 
	adjacent to all other vertices of $G$, one easily verifies that every vertex of $M(G)$ 
	is either adjacent to at least one of $x_1$ or $x_2$, or is at distance two from both 
	of them. Hence, by Theorem \ref{genelteomyiddle}, we obtain
	\[
	\gamma_{pr}^d(M(G)) = 2.
	\]
\end{proof}


  \begin{proposition}
	Let $K_{m,n}$ be complete bipartite graph with $m+n$ vertices where $m,n \geq 1$. Then $$\gamma_{pr}^d(M(K_{m,n}))=2.$$
\end{proposition}
\begin{proof}
	Let $u=x_1$ and consider the edges $x_1y_1$ and $x_1y_2$ in $K_{m,n}$ with corresponding subdivision vertices $a$ and $b$ in $M(K_{m,n})$. It is easy to see that every vertex is either adjacent to $a$ or $b$ or at distance two from both of them. Hence by Theorem \ref{genelteomyiddle}, we obtain $
	\gamma_{pr}^d(M(K_{m,n})) = 2.
	$
\end{proof}

\begin{theorem}
    Let $C_n$ be the cycle and $n\geq3$. Then, $$\gamma_{pr}^d(M(C_n))=2\lceil\frac{n}{4}\rceil.$$
\end{theorem}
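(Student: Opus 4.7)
The plan is to prove matching upper and lower bounds separately. For the upper bound I exhibit an explicit PDD-set of cardinality $2\lceil n/4\rceil$ obtained by placing pairs of adjacent subdivision vertices every four steps around the subdivision cycle of $M(C_n)$. For the lower bound I combine an assignment argument on the $n$ subdivision vertices with the fact that $|D|$ must be even.

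For the upper bound, write $n=4q+r$ with $r\in\{0,1,2,3\}$ and let $D$ consist of the $q$ pairs $\{e_{4j+1,4j+2},\,e_{4j+2,4j+3}\}$ for $j=0,1,\ldots,q-1$, together with one additional analogous pair placed over the leftover subdivisions when $r\neq 0$. Each such bracketed pair is an edge of $M(C_n)$, so $M(C_n)[D]$ has a perfect matching. A routine verification shows that every subdivision vertex of $M(C_n)$ lies in the closed neighbourhood of some member of $D$; every original vertex of the form $4j+1,4j+2,4j+3$ has a neighbour in $D$; and each ``gap'' original vertex $4j$ has the two subdivision vertices $e_{4j-2,4j-1}$ and $e_{4j+1,4j+2}$ both at distance exactly two from it, so it is disjunctively dominated. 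In every residue class we obtain $|D|=2\lceil n/4\rceil$.

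For the lower bound, let $D$ be any PDD-set and let $k=|D|/2$ be the number of matching pairs. Since two original vertices are never adjacent in $M(C_n)$, each pair is either of type (a), consisting of two adjacent subdivision vertices $\{e_{i,i+1},e_{i+1,i+2}\}$, or of type (b), consisting of an original vertex together with an incident subdivision vertex. I assign each of the $n$ subdivision vertices of $M(C_n)$ to a single pair of $D$ certifying its domination: a subdivision in $D$ to its own pair; a directly-dominated one to a pair containing one of its neighbours; and a disjunctively-only dominated one either to a pair both of whose members lie at distance two from it, or, failing that, to one of the two pairs supplying its two distance-two supporters. The crucial claim is that every pair receives at most four assignments: for a type-(a) pair the four candidates $e_{i-1,i},e_{i,i+1},e_{i+1,i+2},e_{i+2,i+3}$ saturate the budget, while the only other subdivisions dominated by the pair, namely $e_{i-2,i-1}$ and $e_{i+3,i+4}$, are each at distance two from only one member of the pair and hence can be re-routed to the pair containing their second supporter; for a type-(b) pair $\{i,e_{i,i+1}\}$ the three direct candidates $e_{i-1,i},e_{i,i+1},e_{i+1,i+2}$ together with the single vertex $e_{i-2,i-1}$, the unique subdivision at distance two from both members of the pair, give exactly four. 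Summing over the $k$ pairs yields $n\leq 4k$, so $|D|\geq n/2$, and since $|D|$ is even this improves to $|D|\geq 2\lceil n/4\rceil$.

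The main obstacle is establishing the per-pair cap of four: one has to argue that in the quarter-help zone of a type-(a) pair, any disjunctively-only dominated subdivision really can be charged to a genuinely distinct pair, and that the lone half-help subdivision of a type-(b) pair is charged to at most one pair. A brief separate check for $n\in\{3,4\}$, where the local distance structure around a pair degenerates, confirms the formula $2\lceil n/4\rceil=2$ directly.
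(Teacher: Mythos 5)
Your upper bound is fine and coincides with the paper's construction (subdivision vertices in positions $\equiv 1,2 \pmod 4$ around the subdivision cycle), and your overall accounting for the lower bound --- charge the $n$ subdivision vertices to the $k$ matching pairs, show each pair receives at most $4$, conclude $|D|\geq n/2$ and hence $|D|\geq 2\lceil n/4\rceil$ by parity --- is the same $n\leq 4k$ count that the paper performs by measuring gaps between consecutive matched pairs. However, the per-pair cap of four, which you correctly identify as the crux, is not actually established by your re-routing rule, and as stated the rule can fail. Consider a subdivision vertex $e_{i-2,i-1}$ that is disjunctively-only dominated and whose two distance-two supporters are $e_{i,i+1}$ and $e_{i-4,i-3}$, the near ends of the two type-(a) pairs $\{e_{i,i+1},e_{i+1,i+2}\}$ and $\{e_{i-5,i-4},e_{i-4,i-3}\}$. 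Neither pair doubly supports $e_{i-2,i-1}$, so your rule forces the charge onto one of them; but each of these pairs can already be carrying its four saturating candidates, so the chosen pair receives five. Without the cap, the naive count (four full charges plus two half-supported vertices per pair) only gives $n\leq 5k$, i.e.\ $|D|\geq 2n/5$, which is strictly weaker than the claimed bound (e.g.\ it gives $|D|\geq 4$ for $n=10$ instead of $6$).

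The missing ingredient is that this bad configuration cannot occur in a valid PDD-set, and proving that requires examining the \emph{original} vertices, which your charging scheme never inspects. In the configuration above the three consecutive subdivision vertices $e_{i-3,i-2},e_{i-2,i-1},e_{i-1,i}$ all lie outside $D$, flanked by $e_{i-4,i-3}$ and $e_{i,i+1}$, each matched to another subdivision vertex. Then the original vertex $v_{i-2}$, adjacent only to $e_{i-3,i-2}$ and $e_{i-2,i-1}$, has at most one vertex of $D$ within distance two (namely $e_{i-4,i-3}$; the candidates $e_{i-1,i}$, $v_{i-3}$, $v_{i-1}$ are forced out of $D$ by the matching structure), so it is not dominated --- a contradiction. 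Pushing this analysis through shows that every disjunctively-only dominated subdivision vertex is in fact doubly supported by a single type-(b) pair, after which your case~(i) rule applies, case~(ii) is vacuous, and the cap of four holds. So the theorem and your strategy are sound, but the step you flag as ``the main obstacle'' is a genuine obstacle: it cannot be closed by re-routing among subdivision vertices alone, and you need to invoke the domination requirement on the vertices of $V(C_n)$ to rule out the length-three runs that would otherwise overload a pair.
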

\begin{proof}
    Let us denote the vertex set of the middle graph, $M(C_n)$ as the union of two disjoint subsets; vertices of $C_{n}$ and subdivision vertices $SD$: 
\[
V_1 = \{i \mid 1 \leq i \leq n,\, i \in V(C_n)\}, \quad SD = \{u_{j}\mid 1 \leq j \leq n\},
\]so that
\[
V(M(C_n)) = V_1 \cup SD.
\]
We first construct an upper bound for $\gamma_{pr}^d(M(C_n))$. Define the set 

\begin{align*} D =\begin{cases} \{u_j \mid 1 \leq j \leq n,\ j \equiv 1  \text{ or } 2\pmod{4}\}, n\not\equiv 1(mod 4)\\
\{u_j \mid 1 \leq j \leq n,\ j \equiv 1 \text{ or } 2\pmod{4}\}\cup \{n\}, n\equiv 1(mod 4).
\end{cases}
\end{align*}
.

Since every vertex in $M(C_n)$ is either in $D$ or disjunctively dominated by a vertex in $D$, and since the subgraph induced by $D$ contains a perfect matching, it follows that $D$ is a \textit{PDD-set} of $M(C_n)$ for all cases of $n$ modulo 4. 
Furthermore, we have $ |D| = 2\left\lceil \frac{n}{4} \right\rceil$. Hence,

\begin{equation}\label{cycleust}
    \gamma_{pr}^d(M(C_n)) \leq |D| = 2\left\lceil \frac{n}{4} \right\rceil.
\end{equation}
Now, we establish a lower bound for $\gamma_{pr}^d(M(C_n))$. Let $D'$ be a $\gamma_{pr}^d$-set of $M(C_n)$. Consider the matching in subgraph $M(C_n)[D']$ given by 
\[
M = \{u_{i_1}u_{i_1+1}, u_{i_2}u_{i_2+1}, \dots, u_{i_r}u_{i_r+1}\},
\]
where $1 \leq i_1 < i_1+1 < i_2 < i_2+1 < \dots < i_r < i_r+1 \leq n$, and for each $k\in\{1,\dots,r\}$, the vertices $u_{i_{k}}$ and $u_{i_{k+1}}$ form a vertex pair.

Notice that between any two consecutive matched pairs $u_{i_k}u_{i_k+1}$ and $u_{i_{k+1}}u_{i_{k+1}+1}$, there are at most two subdivision vertices that are not in $D'$ for any $k \in \{1, \dots, r\}$. Therefore, we obtain
\[ n \leq 2r + 2r = 4r, \]
which implies that \begin{equation}\label{cyclealt}
    |M| = 2r \geq 2\left\lceil \frac{n}{4} \right\rceil.
\end{equation}
Combining the upper bound (\ref{cycleust}) and lower bound (\ref{cyclealt}), we conclude that $\gamma_{pr}^d(M(C_n)) = 2\left\lceil \frac{n}{4} \right\rceil. 
$
\end{proof}
\begin{theorem}
    Let $P_n$ be the path with $n$ vertices, $n\geq2$. Then, $$\gamma_{pr}^d(M(P_n))=\begin{cases}
        2\lceil\frac{n-1}{4}\rceil, &n\equiv2,3 \pmod{4}\\
         2\lceil\frac{n-1}{4}\rceil+2, &\text{otherwise.}
    \end{cases}$$
\end{theorem}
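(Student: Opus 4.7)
Set $V(M(P_n))=V_1\cup V_2$ with $V_1=\{v_1,\dots,v_n\}$ and $V_2=\{u_1,\dots,u_{n-1}\}$. The set $V_2$ forms a path (the backbone), $v_1$ and $v_n$ are leaves attached to $u_1$ and $u_{n-1}$ respectively, and each interior $v_i$ has exactly the two neighbours $u_{i-1},u_i$. The plan is to establish matching upper and lower bounds, split into four cases according to $n\bmod 4$.

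\emph{Upper bound.} I would exhibit an explicit PDD-set built from the ``block'' $B_j=\{u_{4j+1},u_{4j+2}\}$, which is an edge of $M(P_n)$. Each block disjunctively dominates the four consecutive backbone positions $u_{4j},\dots,u_{4j+3}$ along with the pendants incident to them, because any backbone vertex or interior pendant lying strictly between two consecutive blocks has both adjacent blocks as distance-$2$ witnesses. For $n\equiv 3\pmod{4}$ the blocks $B_0,\dots,B_{(n-3)/4}$ tile the entire backbone and form a PDD-set of size $2\lceil(n-1)/4\rceil$. For $n\equiv 2\pmod{4}$ the last would-be block extends past the backbone and is replaced by the edge $\{u_{n-1},v_n\}$, still producing $2\lceil(n-1)/4\rceil$ vertices. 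For $n\equiv 0,1\pmod{4}$ the full block tiling leaves the leaf $v_n$ with only the single vertex $u_{n-2}$ of the set within distance two, so one appends the extra pair $\{u_{n-1},v_n\}$; this accounts for the $+2$ and yields $2\lceil(n-1)/4\rceil+2$ vertices.

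\emph{Lower bound.} Let $D'$ be any PDD-set with matching $M'$ of size $r=|D'|/2$. I classify each matching edge by its \emph{backbone width}: a backbone-backbone pair $\{u_i,u_{i+1}\}$ has width~$2$, while a pendant-backbone pair $\{v_j,u_{j-1}\}$ or $\{v_j,u_j\}$ has width~$1$. Let $q$ count the width-$1$ pairs, so $D'$ contains $s=2r-q$ backbone vertices at indices $b_1<\cdots<b_s$. Disjunctive domination of the leaves $v_1$ and $v_n$ forces $b_1\in\{1,2\}$ and $b_s\in\{n-2,n-1\}$; whenever $b_1=2$ or $b_s=n-2$, an interior width-$1$ pair is present, which raises $q$ correspondingly. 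The key structural constraint is that between two consecutive matching-edge backbone segments the gap cannot exceed two backbone positions, because otherwise the pendant lying three positions into the gap (such as $v_{a+3}$ when the previous segment ends at $u_a$) has at most one vertex of $D'$ at distance~$\leq 2$ and no neighbour in $D'$, violating disjunctive domination. Summing the $(r-q)$ intra-pair diffs of one and the $(r-1)$ between-pair diffs of at most three gives $b_s-b_1\leq 4r-q-3$, and combined with the endpoint constraints one obtains $r\geq\lceil(n+1)/4\rceil$ in every scenario; a direct check shows this equals $\lceil(n-1)/4\rceil$ for $n\equiv 2,3\pmod{4}$ and $\lceil(n-1)/4\rceil+1$ otherwise.

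The main obstacle is justifying the gap-of-at-most-two claim uniformly across pair types. Pendant-backbone pairs contribute extra distance-$2$ witnesses to some nearby vertices and could in principle permit a larger gap in favourable orientations, so I would carry out a case analysis over the four possible type combinations (A--A, A--B, B--A, B--B) at the gap's endpoints and over the two orientations of each Type~B pair, checking in each case that some pendant three positions into the gap remains short one distance-$2$ witness. Once the gap bound is verified universally, the count $b_s-b_1\leq 4r-q-3$ is tight, and matching it with the explicit construction from the upper bound yields the stated exact value in each residue class.
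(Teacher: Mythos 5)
Your upper-bound construction is the same block construction the paper uses and is fine. The genuine gap is in the lower bound, specifically the claim that the gap between two consecutive matching-edge backbone segments cannot exceed two backbone positions uniformly over all pair types, together with your prediction that the four-way case analysis would confirm it. It would not: in the B--B case with both pendants oriented into the gap, three empty backbone positions are possible. Concretely, take the matching edges $\{u_a,v_{a+1}\}$ (pendant to the right of $u_a$) and $\{u_{a+4},v_{a+4}\}$ (pendant to the left of $u_{a+4}$). Every vertex strictly between them is disjunctively dominated: $u_{a+1}$ and $u_{a+3}$ have neighbours in the set; $u_{a+2}$ has $u_a,v_{a+1},u_{a+4},v_{a+4}$ all at distance two; $v_{a+2}$ has the two witnesses $u_a$ and $v_{a+1}$ at distance two (both reached through $u_{a+1}$); and $v_{a+3}$ --- exactly the pendant ``three positions into the gap'' that you nominate as the obstruction --- has the two witnesses $u_{a+4}$ and $v_{a+4}$ at distance two (both reached through $u_{a+3}$). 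Hence your inequality $b_s-b_1\leq 4r-q-3$ is false: for instance $\{u_1,v_2,u_5,v_5\}$ is a genuine PDD-set of $M(P_6)$ with $r=q=2$ and span $b_s-b_1=4>3$. (It has size $4$, so the theorem itself is unharmed; only your bookkeeping breaks.)

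The statement survives because each width-$1$ pair buys its extra unit of reach on one side only, while losing one unit of backbone footprint, so a corrected charging argument still yields $4r\geq n+1$; but that is a different accounting from the one you set up, and the case analysis you defer to would refute rather than verify your gap bound. The cleanest repair --- and essentially what the paper does implicitly, since its displayed matching consists only of backbone pairs $u_{i_k}u_{i_k+1}$ --- is to first reduce to a minimum PDD-set contained entirely in the subdivision vertices (the paper's Lemma~\ref{Lemma1}), after which $q=0$ and only A--A gaps remain. For what it is worth, the paper's own lower bound is a rougher version of your count (and for $n\equiv 0,1\pmod 4$ it merely asserts the extra $+2$ rather than deriving it), so your instinct to track pendant pairs and the endpoint constraints explicitly is sound; it just needs the corrected gap/footprint trade-off.
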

    \begin{proof}
Let us denote the vertex set of the middle graph $M(P_n)$ as the union of two disjoint subsets: \[ V_1 = \{i \mid 1 \leq i \leq n,\, i \in V(P_n)\}, \quad SD = \{u_j \mid 1 \leq j \leq n-1\}, \]so that
\[
V(M(P_n)) = V_1 \cup SD.
\] We first construct an upper bound for $\gamma_{pr}^d(M(P_n))$. Define the set 
\[D = \{u_j \mid 1 \leq j \leq n-1,\ j \equiv 1,2\pmod{4}\}.
\]

Now, consider the following cases:
\begin{itemize}
 \item If $n \equiv 0,1 \pmod{4}$, let $S = D \cup \{u_{n-1}, v_n\}$.
\item  If $n \equiv 2 \pmod{4}$, let $S = D \cup \{v_n\}$.
 \item If $n \equiv 3 \pmod{4}$, then $S = D$.
 \end{itemize}

In all cases, $S$ forms a \textit{PDD-set} in $M(P_n)$. Furthermore, we observe that If $n \equiv 2,3 \pmod{4}$, then $|S| = |D| = 2\left\lceil \frac{n-1}{4} \right\rceil$. Otherwise, $|S| = |D| + 2 = 2\left\lceil \frac{n-1}{4} \right\rceil + 2$. Thus, we obtain the following upper bound:
\begin{equation}\label{pathust}
    \gamma_{pr}^d(M(P_n)) \leq |S| = 
\begin{cases}
2\left\lceil \frac{n-1}{4} \right\rceil, & \text{if } n \equiv 2,3 \pmod{4}, \\
2\left\lceil \frac{n-1}{4} \right\rceil + 2, & \text{otherwise}.
\end{cases}
\end{equation}

Now we construct a lower bound for $\gamma_{pr}^d(M(P_n))$. Let $D'$ be a $\gamma_{pr}^d$-set in $M(P_n)$. Consider the induced subgraph $M(P_n)[D']$ and let $M = \{u_{i_1}u_{i_1+1}, u_{i_2}u_{i_2+1}, \dots, u_{i_r}u_{i_r+1}\}$ be the matching in $M(P_n)[D']$, where 
\[1 \leq i_1 < i_1 + 1 < i_2 < i_2 + 1 < \dots < i_r < i_r + 1 \leq n-1.
\]

Two cases arise:
\begin{itemize}
\item Case 1: If $n \equiv 2,3 \pmod{4}$.

Note that between any two consecutive matched pairs $u_{i_k}u_{i_k+1}$ and $u_{i_{k+1}}u_{i_{k+1}+1}$, there are at most two subdivision vertices that are not in $D'$, for each $k \in \{1,\dots,r\}$. Hence, we obtain 
$n - 1 \leq 2r+2r=4r$ and therefore \begin{equation}\label{pathalt1}
    |M| = 2r \geq 2\left\lceil \frac{n-1}{4} \right\rceil.
\end{equation} 

\item Case 2: If $n \equiv 0,1 \pmod{4}$.

In this case, let $N = M \cup \{v_n, u_{n-1}\}$. By Case 1, we know that $|M| = 2r$, and therefore \begin{equation}\label{pathalt2}
    |N| = 2r + 2 \geq 2\left\lceil \frac{n-1}{4} \right\rceil + 2.
\end{equation}
\end{itemize}
Combining the lower bounds obtained in both cases (\ref{pathalt1}) and (\ref{pathalt2}) with the previously established upper bounds (\ref{pathust}), we conclude that 
$\gamma_{pr}^d(M(P_n))=
\begin{cases}
2\left\lceil \frac{n-1}{4} \right\rceil, & \text{if } n \equiv 2,3 \pmod{4}, \\
2\left\lceil \frac{n-1}{4} \right\rceil + 2, & \text{otherwise}.
\end{cases}   
$
\end{proof}
\begin{proposition}
	Let $F_n$ be friendship graph with $n\geq 2$. Then $$\gamma_{pr}^d(M(F_n))=2.$$
\end{proposition}
\begin{proof}
	Let $v_1$ be the central vertex of $F_n$ and consider one of the triangles $\{v_1,v_2,v_3\}$. In the middle graph $M(F_n)$, let $x_1$ and $x_2$ be the subdivision vertices corresponding to the edges $v_1v_2$ and $v_1v_3$, respectively. Then $x_1$ and $x_2$ are adjacent in $M(F_n)$ and, it follows immediately that every vertex of $M(F_n)$ is either adjacent to at least one of $x_1$ or $x_2$, or at distance two from both of them. Hence by Theorem \ref{genelteomyiddle}, we obtain $
	\gamma_{pr}^d(M(F_n)) = 2.
	$
\end{proof}
%
%
\begin{theorem}
    Let $D_{n,m}$ be the double star graph where $n,m \geq 1$. Then $$\gamma_{pr}^d(M(D_{n,m}))=4.$$
\end{theorem}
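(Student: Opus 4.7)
The plan is to prove $\gamma_{pr}^d(M(D_{n,m})) = 4$ by sandwiching the two bounds. Label the central vertices of $D_{n,m}$ as $x, y$ (with $xy \in E$), with pendants $u_1, \ldots, u_n$ attached to $x$ and $v_1, \ldots, v_m$ attached to $y$. In $M(D_{n,m})$, the subdivision vertices are $e_{xy}$, $e_{xu_i}$, and $e_{yv_j}$; the set $\{e_{xy}\} \cup \{e_{xu_i}\}_i$ forms a clique (all sharing $x$), and the analogous statement holds on the $y$-side.

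For the upper bound, I would exhibit $S = \{e_{xu_1}, e_{xu_2}, e_{yv_1}, e_{yv_2}\}$ as a PDD-set of size $4$ (assuming $n, m \geq 2$; the small cases $n = 1$ or $m = 1$ admit an easy parallel construction in which a missing $e_{xu_2}$ is replaced by $e_{xy}$, preserving the matching). The pairs $e_{xu_1}e_{xu_2}$ and $e_{yv_1}e_{yv_2}$ form a perfect matching in $M(D_{n,m})[S]$, and routine checking confirms disjunctive domination: $x$, $y$, $e_{xy}$, and the matched leaves are adjacent to elements of $S$; each unmatched leaf $u_i$ with $i \geq 3$ has both $e_{xu_1}$ and $e_{xu_2}$ at distance $2$ through $e_{xu_i}$, and symmetrically for $v_j$; the remaining subdivision vertices on each side are adjacent to an element of $S$ via the clique structure.

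For the lower bound, since the perfect-matching requirement forces $|S|$ to be even, it suffices to rule out $|S| = 2$. Suppose for contradiction $S = \{p, q\}$ with $pq$ an edge. The key structural observation is that the closed distance-$2$ neighborhood of every leaf $u_i$ is contained in the ``$x$-side block'' $\{u_i, e_{xu_i}, x, e_{xy}\} \cup \{e_{xu_j} : j \neq i\}$, and symmetrically for each $v_j$; these two blocks intersect only at $e_{xy}$. I would then case-split according to how the $u_i$'s are dominated. If some $u_k \in S$, the matching condition and the fact that $e_{xu_k}$ is the unique neighbor of $u_k$ in $M(D_{n,m})$ force $q = e_{xu_k}$; but $\{u_k, e_{xu_k}\}$ is at distance $\geq 3$ from every $v_j$, so no $v_j$ can be disjunctively dominated. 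If instead some $e_{xu_k} \in S$ with no $u_i \in S$, then $q$ is a neighbor of $e_{xu_k}$ and hence still on the $x$-side, and at most one element of $S$ (only $e_{xy}$, if present) lies within distance $2$ of any $v_j$, again breaking domination. Finally, if no $u_i$ and no $e_{xu_i}$ lies in $S$, the disjunctive condition applied to each $u_i$ forces $\{p,q\} \subseteq \{x, e_{xy}\}$; the symmetric argument on the $y$-side forces $\{p,q\} \subseteq \{y, e_{xy}\}$, so $|S| \leq 1$, a contradiction.

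The main obstacle I expect is this lower-bound case analysis, which depends on carefully tracking the distance-$2$ balls of leaves on both sides and exploiting their near-disjointness (intersecting only at $e_{xy}$) to conclude that no pair of matched vertices can simultaneously witness disjunctive domination of leaves on both sides of the double star.
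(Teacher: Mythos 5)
Your proposal is correct and follows essentially the same route as the paper: an explicit four-vertex PDD-set (two matched subdivision vertices per center) for the upper bound, and ruling out two-element sets for the lower bound. In fact your lower-bound argument, which tracks the distance-$2$ balls of the leaves and notes they meet only in $e_{xy}$, is carried out more rigorously than the paper's, which only asserts informally that a single adjacent pair cannot cover both stars; the only loose end is the degenerate case $n=m=1$ (where $D_{1,1}\cong P_4$ and both replacements would want $e_{xy}$), which needs a separately chosen $4$-set.
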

    \begin{proof}
    Let $M(D_{n,m})$ denote the middle graph of the double star graph $D_{n,m}$. Let the vertex set of $M(D_{n,m})$ be partitioned as $V(M(D_{n,m})) = V_1 \cup V_2$, where 
\[
V_1 = \{i \mid 1 \leq i \leq n+m+2,\ i \in V(D_{n,m})\},  SD = \{u_j \mid 1 \leq j \leq n+m+1\}.
\]

We know from Theorem~\ref{buyuk2} that $\gamma_{pr}^d(G) \geq 2$ for any graph $G$. Thus, $\gamma_{pr}^d(M(D_{n,m})) \geq 2$. 
        To complete the proof, we will show that no set of two vertices can form a \textit{PDD-set} in $M(D_{n,m})$, and therefore $\gamma_{pr}^d(M(D_{n,m})) \geq 4$.

Assume, for contradiction, that there exists a \textit{PDD-set} $S'$ of $M(D_{n,m})$ with $|S'| = 2$. Since $S'$ must induce a perfect matching and ensure that every vertex in $M(D_{n,m})$ is either in $S'$ or disjunctively dominated by a vertex in $S'$, the only possibility is that $S'$ consists of two adjacent vertices. That is, $S' = \{x, y\}$ such that $d_{M(D_{n,m})}(x, y) = 1$.

However, due to the structure of the double star graph, the vertices $x$ and $y$ together can disjunctively dominate at most a localized region of the graph either within a single star component or among a small subset of adjacent vertices. This is insufficient to cover the entire graph, especially when $n, m \geq 1$, since the graph contains at least $n + m + 3$ vertices in total. Thus, not all vertices are disjunctively dominated by $S'$, and $S'$ cannot be a \textit{PDD-set}. Hence, no set of two vertices can form a \textit{PDD-set} in $M(D_{n,m})$, implying
\[
\gamma_{pr}^d(M(D_{n,m})) \geq 4.
\]
In the case of $M(D_{n,m})$, we construct a specific set of four vertices that forms a \textit{PDD-set} and achieves this bound. Let the two central vertices of the stars in $D_{n,m}$ be denoted by $a$ and $b$, and let $u_a$ and $u_b$ be any two edge-subdivision vertices adjacent to $a$ and $b$, respectively in $M(D_{n,m})$. Let
$
S = \{a, u_a, b, u_b\}
$ is a PDD-set.  By construction, every vertex in $M(D_{n,m})$ is either in $S$ or disjunctively dominated by a vertex in $S$, and the subgraph induced by $S$ contains a perfect matching. Therefore, $S$ is a \textit{PDD-set} of $M(D_{n,m})$, and we have
\[
\gamma_{pr}^d(M(D_{n,m})) \leq 4.
\]
Therefore, we conclude that
$\gamma_{pr}^d(M(D_{n,m})) = 4.$
    \end{proof}

\section{Main Results}
In this section, we present general results on the paired disjunctive domination of middle graphs. In particular, we establish lower and upper bounds for trees and general graphs, and provide exact value related to the join operation.
\begin{lemma}\label{Lemma1}
Let $G$ be a connected graph with $n\geq 5$ vertices and $S$ be a \textit{PDD-set} of $M(G).$ Then there exist $S^{\prime }\subset SD$ a \textit{PDD-set} of $M(G)$ with $\left\vert
S^{\prime }\right\vert \leq \left\vert S\right\vert .$
\end{lemma}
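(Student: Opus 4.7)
The plan is to proceed by induction on $|S \cap V(G)|$; the base case $S \subset \mathcal{M}$ is trivial, and in the inductive step the goal is to build a PDD-set $\tilde{S}$ of $M(G)$ with $|\tilde{S}| \leq |S|$ and $|\tilde{S} \cap V(G)| < |S \cap V(G)|$. Fix $v \in S \cap V(G)$. Since $V(G)$ is an independent set in $M(G)$ (every edge of $M(G)$ is incident with a subdivision vertex), the perfect matching of $M(G)[S]$ must pair $v$ with some $e_{vw} \in S$, where $vw \in E(G)$. A direct check using $N_{M(G)}(e_{vw}) \supseteq N_{M(G)}(v) \setminus \{e_{vw}\}$, and the analogous inclusion at distance two, shows that $e_{vw}$ (and, indeed, any subdivision vertex sharing a $G$-endpoint with $e_{vw}$) covers every vertex that $v$ covers in the disjunctive sense. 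This is what will license removing $v$ from $S$.

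I would then split into cases based on the local structure around the pair $(v,e_{vw})$. \textbf{Case A:} some neighbour $w' \neq w$ of $v$ in $G$ satisfies $e_{vw'} \notin S$; set $\tilde{S} = (S \setminus \{v\}) \cup \{e_{vw'}\}$ and re-pair $e_{vw}$ with $e_{vw'}$ (which share $v$). \textbf{Case B:} Case A fails but some neighbour $u \neq v$ of $w$ satisfies $e_{wu} \notin S$; set $\tilde{S} = (S \setminus \{v\}) \cup \{e_{wu}\}$, re-pairing $e_{vw}$ with $e_{wu}$ (which share $w$). In each of these, $|\tilde{S}| = |S|$ and the PDD property transfers via the neighbourhood inclusions above. \textbf{Case C:} neither swap is free, so every $e_{vw'}$ with $w' \in N_G(v)$ and every $e_{wu}$ with $u \in N_G(w)$ already lies in $S$. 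In the generic part of Case C (when $\deg_G(v) \geq 2$, or $\deg_G(v) = 1$ with $\deg_G(w) \geq 3$) I would take $\tilde{S} = S \setminus \{v, e_{vw}\}$ with $|\tilde{S}| = |S| - 2$, and use the dense surrounding $S$-structure to cover every vertex previously dominated by $v$ or $e_{vw}$.

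The main obstacle is Case C in the sub-case $\deg_G(v) = 1$, $\deg_G(w) = 2$, and $w \notin S$: then the naive removal leaves $v$ with only one distance-two witness in $\tilde{S}$, namely the unique $e_{wm}$ (where $m$ is the other neighbour of $w$), which is insufficient for disjunctive domination. Here the hypothesis $n \geq 5$ and the connectedness of $G$ become essential: they force the graph to extend past $m$, and they guarantee the existence of an alternating path in $M(G)$ starting with the matched edge $v e_{vw}$, alternating between matched and unmatched edges through subdivision vertices of $S$ (plentiful by the saturation of Case C), and terminating at some free vertex of $\mathcal{M} \setminus S$. Augmenting along this path replaces $v$ by that free subdivision vertex and re-distributes the matching, yielding a $\tilde{S}$ of the same size as $S$ with one fewer element in $V(G)$. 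Verifying that the disjunctive-domination property is preserved at each augmentation step, by repeated application of the neighbourhood inclusions from the first paragraph, is the technical heart of the argument; by induction on $|S \cap V(G)|$, this produces the required $S' \subset \mathcal{M}$.
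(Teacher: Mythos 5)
Your overall strategy---iteratively trading each vertex of $S\cap V(G)$ for a subdivision vertex---is the same one the paper uses, and your Cases A and B are sound: since Case A's failure puts every $e_{vx}$ with $x\in N_G(v)$ into $S$, any vertex that used $v$ as a distance-two witness already has a neighbour in $S\setminus\{v\}$, so the swap and the re-pairing through the shared endpoint both go through. The problem is the last sub-case of Case C, which you correctly isolate ($\deg_G(v)=1$, $\deg_G(w)=2$, $w\notin S$) but do not actually resolve: the alternating path you invoke, ``terminating at some free vertex of $\mathcal{M}\setminus S$,'' need not exist. Connectedness and $n\geq 5$ do not guarantee it. Concretely, let $G$ be the broom with centre $m$ adjacent to $a,b,c,w$ and with $w$ adjacent to the leaf $v$ (so $n=6$), and take
\[
S=\{v,\,e_{vw},\,e_{wm},\,e_{ma},\,e_{mb},\,e_{mc}\}
\]
with matching $\{v e_{vw},\, e_{wm}e_{ma},\, e_{mb}e_{mc}\}$. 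This is a PDD-set of $M(G)$ landing exactly in your hard sub-case, yet $\mathcal{M}\subseteq S$, so there is no free subdivision vertex anywhere for an augmenting path to reach. The lemma's conclusion still holds here (e.g.\ $S'=\{e_{vw},e_{wm},e_{ma},e_{mb}\}$ works), but producing it requires a strictly size-decreasing restructuring---an alternating path whose \emph{both} endpoints are deleted and whose interior is re-matched---which is a different mechanism from the one you describe, and whose preservation of disjunctive domination you would still have to verify. Since you yourself defer ``the technical heart of the argument'' at precisely this point, the proof is incomplete.

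For comparison, the paper's own proof handles the analogous situation (all $e_{iv}$ adjacent to $v$ lying in $S$) by adjoining some $e_{mn}\in\mathcal{M}\setminus S$ taken from $N_2(e_{iv},M(G))$ and deleting $v$; it does not address how the matching is repaired (a vertex at distance two from $e_{iv}$ is not adjacent to it) nor what happens when no such $e_{mn}$ exists, so it glosses over the same difficulty you ran into rather than solving it. Your write-up is the more careful of the two, but to close the gap you need either (i) a separate argument for the saturated sub-case that permits $|S'|<|S|$ (exploiting that the only requirement is $|S'|\leq|S|$, not equality), or (ii) a proof that whenever no augmenting path to a free vertex of $\mathcal{M}\setminus S$ exists, an alternating path joining $v$ to a second deletable vertex does, together with the check that the disjunctive condition survives the deletion of that second vertex.
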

\begin{proof}
    If \( S \subset SD \), then the result is satisfied by choosing \( S' = S \). Otherwise, the set \( S \) also contains vertices from \( V(G) \). In this case, let us select a vertex \( t \in S \cap V(G) \). If all vertices \( v_{it} \in SD \) (or \( v_{ti} \)) adjacent to \( t \) are in \( S \), then two cases arise:\\
Case 1: If the vertex \( t \) does not contribute to the paired property of the set \( S \), then we take \( S_{1} = S - \{t\} \).\\
Case 2: If the vertex \( t \) contributes to the paired property of the set \( S \), then there are two subcases:
\begin{itemize}
\item If all vertices \( v_{it} \in SD \) adjacent to \( t \) are in \( S \), then we select a vertex \( v_{mn} \in SD \setminus S \) from the set \( N_{2}(v_{it}, M(G)) \), and take \( S_{1} = (S \cup \{v_{mn}\}) \setminus \{t\} \).
\item If not all adjacent \( v_{it} \in SD \) are in \( S \), suppose \( v_{it} \in M \setminus S \). In this case, we take \( S_{1} = (S \cup \{v_{it}\}) \setminus \{t\} \).
\end{itemize}
This process terminates in a finite number of steps, and eventually, a set \( S^{'} \) consisting only of vertices in \( SD \) is obtained. Furthermore, in each step, \( S^{'} \) remains a \textit{PDD-set}. Therefore, \( S^{'} \) is a \textit{PDD-set} of M(G). Thus, we construct $\left\vert
S^{\prime }\right\vert \leq \left\vert S\right\vert .$
\end{proof}

\begin{lemma}
Let $G$ be a connected graph with $n\geq 5$ vertices. Let $t\in V(G)$ be a vertex that is distinct from a
support vertex. Then,%
\[
\gamma _{pr}^{d}(M(G-t))\leq \gamma _{pr}^{d}(M(G))\leq \gamma
_{pr}^{d}(M(G-t))+2.
\]
\end{lemma}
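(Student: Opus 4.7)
The plan is to prove the two inequalities separately: the upper bound via a direct augmentation of a minimum PDD-set of $M(G-v)$, and the lower bound by modifying a minimum PDD-set of $M(G)$ via Lemma~\ref{Lemma1}.

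For the upper bound $\gamma_{pr}^d(M(G))\leq\gamma_{pr}^d(M(G-v))+2$, let $T$ be a $\gamma_{pr}^d$-set of $M(G-v)$ and pick any neighbor $u\in N_G(v)$, which exists since $G$ is connected with $n\geq 5$. I take $T'=T\cup\{v,e_{vu}\}$. Neither $v$ nor $e_{vu}$ belongs to $V(M(G-v))$, so they lie outside $T$ and $|T'|=|T|+2$; since $v$ and $e_{vu}$ are adjacent in $M(G)$ via the subdivision edge, the perfect matching of $T$ in $M(G-v)$ extends to a perfect matching of $T'$ in $M(G)$. Because $M(G-v)$ coincides with the induced subgraph $M(G)[V(M(G-v))]$, every vertex in $V(M(G-v))\setminus T$ remains disjunctively dominated by $T$ in $M(G)$. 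Among the remaining vertices, $v$ lies in $T'$, and each $e_{vi}$ with $i\neq u$ is adjacent to $e_{vu}\in T'$ through their common endpoint $v$. Hence $T'$ is a PDD-set of $M(G)$ of the required cardinality.

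For the lower bound $\gamma_{pr}^d(M(G-v))\leq\gamma_{pr}^d(M(G))$, let $S$ be a $\gamma_{pr}^d$-set of $M(G)$. By Lemma~\ref{Lemma1} I may assume $S\subseteq\mathcal{M}$. Partition $S$ into $S_v=S\cap\{e_{vi}:i\in N_G(v)\}$ and $S'=S\setminus S_v$. The non-support hypothesis is used precisely here: for each $i\in N_G(v)$, the vertex $i$ has some neighbor $j\neq v$ in $G$, so the substitute $e_{ij}\in V(M(G-v))$ is adjacent in $M(G)$ to every vertex of $V(M(G-v))$ previously dominated by $e_{vi}$ via adjacency, namely $i$ and the $e_{ik}$ for $k\in N_G(i)\setminus\{v\}$. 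I substitute each $e_{vi}\in S_v$ by such an $e_{ij_i}$, selecting $j_i$ according to the partner structure of the matching $M_S$: pairs $\{e_{vi},e_{vk}\}$ lying entirely in $S_v$ are removed together with no replacement; for a mixed pair $\{e_{vi},e_{ik}\}$ with $e_{ik}\in S'$, I pick $j_i\neq k$ so that the new subdivision vertex $e_{ij_i}$ is adjacent to $e_{ik}$ through the shared endpoint $i$ and can serve as its new partner. The resulting set $S''\subseteq V(M(G-v))$ satisfies $|S''|\leq|S|$, admits a perfect matching, and disjunctively dominates $V(M(G-v))$.

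The main obstacle lies in the degenerate cases of the substitution step. If some $i\in N_G(v)$ has only two neighbors $\{v,k\}$ in $G$, then the unique valid replacement $e_{ik}$ already coincides with the partner of $e_{vi}$ in $M_S$, so the naive substitution leaves $e_{ik}$ without a partner. Resolving this requires either coalescing pairs when the same such $i$ arises from two different vertices of $S_v$, or locally augmenting $S'$ near $i$ with a paired addition whose cost is absorbed within $|S_v|$ so that the final count does not exceed $|S|$. A secondary verification is that distance-$2$ dominations in $M(G)$ passing through $v$ or through an $e_{vj}\notin V(M(G-v))$ remain valid in $M(G-v)$; a short check shows that the non-support condition forces any such configuration between two vertices of $V(M(G-v))$ to collapse to an adjacency in $M(G-v)$, so no coverage is lost by the deletion.
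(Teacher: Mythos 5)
Your upper-bound argument is sound and essentially the paper's: the paper augments a PDD-set of $M(G-v)$ by the two adjacent subdivision vertices $e_{w_1v},e_{vw_2}$, while you add the adjacent pair $\{v,e_{vu}\}$; either pair covers $v$ and all $e_{vi}$, and the rest of the domination survives because $M(G-v)$ is an induced subgraph of $M(G)$.

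The lower bound, however, is not actually proved. You correctly locate the obstacle --- when some $i\in N_G(v)$ has $N_G(i)=\{v,k\}$ and the matching pairs $e_{vi}$ with $e_{ik}$, the only admissible substitute $e_{ik}$ is already in the set --- but you leave it open: ``coalescing pairs'' is only available when two such vertices happen to interact, and ``locally augmenting $S'$ with a paired addition'' inserts two vertices while deleting one, so the required bound $|S''|\le |S|$ is no longer guaranteed; you do not show the cost can always be absorbed. A second, unacknowledged defect is that deleting a matched pair $\{e_{vi},e_{vk}\}\subseteq S_v$ ``with no replacement'' can destroy the domination of $i$, of $k$, and of the subdivision vertices $e_{il}$, $e_{km}$, which may have been covered only through $e_{vi}$ and $e_{vk}$; nothing in your argument certifies they remain disjunctively dominated in $M(G-v)$. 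The paper sidesteps both problems by \emph{not} staying inside $\mathcal{M}$ at this stage: it replaces $e_{wv}$ by the original vertex $w\in V(G-v)$ (and analogously in the distance-two case), a substitute that exists regardless of $\deg_G(w)$ and is adjacent in $M(G-v)$ to $w$'s entire clique of incident subdivision vertices, so your degenerate degree-$2$ case never arises. Finally, the distance-$2$ verification you defer (``a short check shows\ldots'') is precisely the kind of claim that must be carried out, since disjunctive domination through paths that pass via $v$ or via some $e_{vj}$ is exactly what deletion of $v$ threatens; asserting that the non-support condition makes it collapse to an adjacency is not a proof.
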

\begin{proof}
    (i) First, let us show that $\gamma_{pr}^{d}(M(G))\leq\gamma_{pr}^{d}(M(G-t))+2$. Let \( S \) be a \textit{PDD-set} of the graph \( M(G - t) \). In this case, for every \( w \in N_{\leq2}(t, G) \), we must have either \( w \in S \) or \( v_{iw} \in SD \), which implies that \( v_{iw} \in S \). As a result, for some \( w_1, w_2 \in N_G(t) \), the set \( S \cup \{ v_{w_1t}, v_{tw_2} \} \) is a \textit{PDD-set} for \( M(G) \). Thus, we obtain the conclusion 
    \[\gamma_{pr}^{d}(M(G))\leq\gamma_{pr}^{d}(M(G-t))+2.\]
(ii) Now, let us show that $\gamma_{pr}^{d}(M(G-t))\leq\gamma_{pr}^{d}(M(G))$. Let $S$ be a $\gamma_{pr}^{d}-set$ of $M(G)$. From Lemma \ref{Lemma1}, we assume that $S \subseteq SD$. Let us consider the set 
\[S_{t} = N_{\leq2}(t, M(G)) \cap S. \] Since $S$ is a $\gamma_{pr}^{d}-set$, it is easy to see that \( |S_t| \geq 1 \). Let the neighbors and the vertices at distance $2$ from vertex $t$ in the graph $G$ be denoted by \( w \in N_G(t) \) and \( p \in N_2(t, G) \), respectively. Assume that \( |S_t| = k \), where \( k \geq 1 \), and the vertices in \( S_t \) are of the form \( v_{wt} \) or \( v_{pw} \). Based on these types of vertices, two cases arise:\\
Case 1: If \( v_{wt} \in S \), then the set \[ S_{1} = (S - v_{wt}) \cup \{w\} \] is a \textit{PDD-set} for the graph \( M(G - t) \).\\
Case 2: If \( v_{pw} \in S_t \), then the set \( S_{1} = (S - v_{pw}) \cup \{p\} \) is a \textit{PDD-set} for the graph \( M(G - t) \).
For each vertex in $S_{t}$, the corresponding operation described above is applied depending on which of the two cases it falls under, resulting in a \textit{PDD-set} \( S_k \) for the graph \( M(G - t) \). It is easy to verify that \( |S_k|= |S| \). Hence, it follows that $\gamma_{pr}^{d}(M(G-t))\leq\gamma_{pr}^{d}(M(G)).$ 
\end{proof}

\begin{theorem}
Let $G$ be a connected graph of order $n\geq2.$ Then, 
\[
2\leq \gamma _{pr}^{d}(M(G))\leq \gamma _{pr}^{d}(M(P_{n}).
\]
\end{theorem}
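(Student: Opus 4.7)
The plan is to prove the two inequalities separately. The lower bound $\gamma_{pr}^d(M(G))\geq 2$ is immediate from Theorem~\ref{buyuk2}, since $M(G)$ is isolate-free whenever $G$ is.

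For the upper bound I would first establish a monotonicity principle: for any isolate-free spanning subgraph $H$ of $G$, $\gamma_{pr}^d(M(G))\leq \gamma_{pr}^d(M(H))$. It suffices to prove this one missing edge at a time. Adding an edge $e=uv$ to $H$ only introduces the new vertex $e_{uv}$ in $M(H+e)$, whose neighbours are $u$, $v$, and all $e_{uw}, e_{vw'}$ with $w\in N_H(u)$ or $w'\in N_H(v)$; every edge of $M(H)$ is preserved. For any $\gamma_{pr}^d$-set $S$ of $M(H)$, the perfect matching of $M(H)[S]$ lifts to $M(H+e)[S]$, and I would check in each of the three ways $S$ can disjunctively dominate $u$ in $M(H)$ ($u\in S$; some $e_{uw}\in S$; or two vertices of $S$ at distance two from $u$) that the same witness disjunctively dominates $e_{uv}$ in $M(H+e)$, yielding either a neighbour of $e_{uv}$ in $S$ or two vertices of $S$ at distance at most two from $e_{uv}$. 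Hence $S$ remains a PDD-set of $M(H+e)$.

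If $G$ already contains a Hamilton path, applying this monotonicity with $H=P_n$ closes the proof. For the remaining (connected) case I would pass first to a spanning tree $T$ of $G$, which gives $\gamma_{pr}^d(M(G))\leq \gamma_{pr}^d(M(T))$, and then reduce to the tree claim $\gamma_{pr}^d(M(T))\leq \gamma_{pr}^d(M(P_n))$ for every tree $T$ of order $n$. I would attempt this tree inequality by induction on $n$, deleting a leaf $v$ (which is a non-support vertex for every tree with $n\geq 3$) and invoking the non-support vertex deletion lemma above, with Observation~\ref{Shadow2.7} used to control the local structure around support vertices.

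The main obstacle is synchronising the inductive step with the explicit formula for $\gamma_{pr}^d(M(P_n))$: the deletion lemma permits $\gamma_{pr}^d(M(T))\leq \gamma_{pr}^d(M(T-v))+2$, whereas the difference $\gamma_{pr}^d(M(P_n))-\gamma_{pr}^d(M(P_{n-1}))$ equals $2$ only when $n\equiv 0\pmod 4$ and is $0$ otherwise. A naive one-vertex induction therefore loses ground on three of four residues. I would circumvent this by inducting in blocks of four vertices, removing a judiciously chosen pendant path of length three whose deletion realises the $+2$ gain that $P_n$ accumulates once per four vertices, with the small cases $n\leq 8$ handled directly as base cases.
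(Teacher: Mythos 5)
Your lower bound is correct and in fact cleaner than the paper's, which detours through the claim that a PDD-set of $M(G)$ is also one of $M(K_n)$; Theorem~\ref{buyuk2} alone suffices. Your edge-addition monotonicity lemma is also sound: every neighbour of the new vertex $e_{uv}$ in $M(H+e)$ other than $u$ and $v$ is a subdivision vertex incident to $u$ or to $v$, so any witness that disjunctively dominates $u$ in $M(H)$ (membership of $u$ in $S$, an incident subdivision vertex in $S$, or two vertices of $S$ reached through some $e_{uw}$) transfers to $e_{uv}$, and no old adjacency is destroyed. This gives the reduction to spanning trees rigorously, which is more than the paper does: its entire argument for the upper bound is the sentence that the path is ``the most vulnerable structure considering the number of connections.''

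The genuine gap is the tree inequality $\gamma_{pr}^{d}(M(T))\leq\gamma_{pr}^{d}(M(P_{n}))$, which is where all the content of the theorem lives and which your proposal does not establish. You correctly diagnose that single-leaf deletion with the $+2$ lemma overshoots on three of every four values of $n$, but the proposed repair fails on two counts. First, deleting a pendant path on four vertices by iterating the single-vertex lemma costs $+8$, not the $+2$ you need, so a genuinely new pendant-path deletion lemma would have to be stated and proved. Second, and more seriously, many trees contain no pendant path of length three at all --- stars, double stars, spiders whose legs have length at most two, complete binary trees --- so the block induction cannot even begin on them, and these are not negligible cases: for the complete binary tree on $7$ vertices one checks that $\gamma_{pr}^{d}(M(T))=4=\gamma_{pr}^{d}(M(P_{7}))$, so the inequality is already tight off the path and any induction must treat such trees exactly rather than crudely. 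As it stands your argument is a correct reduction followed by an honest but unresolved sketch; to be fair, the paper's own treatment of this step is likewise an assertion rather than an argument, so you have not overlooked a complete proof that exists in the text --- but the upper bound remains unproved in your write-up.
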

\begin{proof}
According to Theorem \ref{buyuk2}, the lower bound is obtained as
\[
\gamma_{pr}^{d}(M(G)) \geq 2.
\]

Let $G$ be a spanning tree, and let $S$ be a \( \gamma_{pr}^{d} \)-set of $M(G)$. By Lemma \ref{Lemma1}, we may assume that \( S \subseteq SD \). The most vulnerable structure is the path graph considering the number of connections. This implies,
\[
|S| \leq \gamma_{pr}^{d}(M(P_n)).
\]
Since \( S \) is a \( \gamma_{pr}^{d} \)-set of \( M(G) \), it follows that
\[
\gamma_{pr}^{d}(M(G)) \leq \gamma_{pr}^{d}(M(P_n)).
\]
Thus, the desired inequality is obtained.
\end{proof}

\begin{theorem}
\label{tree}Let $T$ be a tree graph, which is not a star. Let $%
SSV=\{v_{i}:1\leq i\leq k\}$ be the strong support vertices of the tree $T.$
For any $u,w\in leaf(T)$, under the conditions that $N_{G}(u)\cap
N_{G}(w)=\emptyset $ and $d_{G}(u,w)>3,$ it follows that%
\[
\gamma _{pr}^{d}(M(T))\geq 2k+2(\left\vert leaf(T)\right\vert
-\sum\limits_{i=1}^{k}(\deg (v_{i})-1)).
\]
\end{theorem}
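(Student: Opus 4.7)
The plan is to fix an arbitrary $\gamma_{pr}^d$-set $S$ of $M(T)$ and, by Lemma \ref{Lemma1}, replace it if necessary by a PDD-set of no larger size contained in $\mathcal{M}$, so that we may assume $S\subseteq\mathcal{M}$. For each support vertex $u$ of $T$ I would work with
$$
C_u=\{\,e_{uw}: w\in N_T(u)\,\}\subseteq\mathcal{M},
$$
which induces a clique in $M(T)$ and, together with $u$, exhausts the distance-at-most-two region of every leaf adjacent to $u$. The argument rests on two facts: (a) $|S\cap C_u|\geq 2$ for every support vertex $u$, and (b) the family $\{C_u:u\text{ a support vertex}\}$ is pairwise disjoint under the distance hypothesis.

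For (a) I split on whether $u$ is strong or weak. If $u=v_i$ is strong, pick two leaves $\ell_1,\ell_2$ adjacent to it: since no original vertex lies in $S$, the disjunctive rule applied to $\ell_j$ requires either $e_{v_i\ell_j}\in S$ or at least two vertices of $C_{v_i}\setminus\{e_{v_i\ell_j}\}$ in $S$. A short case analysis on whether the unique element (supposing $|S\cap C_{v_i}|\leq 1$ toward contradiction) equals $e_{v_i\ell_1}$, $e_{v_i\ell_2}$, or neither, shows that fewer than two elements of $C_{v_i}$ cannot disjunctively dominate both leaves. If $u$ is weak with unique leaf $\ell$, then either the disjunctive rule for $\ell$ already forces two vertices of $C_u\setminus\{e_{u\ell}\}$ into $S$, or $e_{u\ell}\in S$; in the latter case the pairing partner of $e_{u\ell}$ must be $M(T)$-adjacent to it, and since $\ell$ is a leaf of $T$ with no other incident edges, the only candidates inside $\mathcal{M}$ belong to $C_u\setminus\{e_{u\ell}\}$. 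Either way, $|S\cap C_u|\geq 2$.

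For (b), a common vertex $e_{uw}\in C_u\cap C_{u'}$ would force the edges $uw$ and $u'w'$ of $T$ to coincide, hence $u=u'$ or $uu'\in E(T)$. Applying the hypothesis to a leaf $\ell$ adjacent to $u$ and a leaf $\ell'$ adjacent to $u'$ (which have disjoint open neighbourhoods when $u\neq u'$) gives $d_T(\ell,\ell')>3$ and therefore $d_T(u,u')\geq 2$, so no two distinct support vertices are adjacent. Combining the two facts,
$$
|S|\;\geq\;\sum_{u\text{ support}}|S\cap C_u|\;\geq\;2(k+w),
$$
where $w$ denotes the number of weak support vertices, equivalently the number of leaves attached to weak support vertices. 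Writing $p_i$ and $q_i$ for the leaf and non-leaf degree of $v_i$ gives $L:=|leaf(T)|=\sum_{i=1}^{k}p_i+w$ and $\sum_{i=1}^{k}(\deg(v_i)-1)=\sum_{i=1}^{k}(p_i+q_i-1)$. Because $T$ is not a star every strong support vertex has at least one non-leaf neighbour, so $q_i\geq 1$ and $\sum_{i=1}^{k}(\deg(v_i)-1)\geq L-w$; rearranging yields $2k+2\bigl(L-\sum_{i=1}^{k}(\deg(v_i)-1)\bigr)\leq 2(k+w)\leq |S|=\gamma_{pr}^d(M(T))$, which is the required inequality.

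The main obstacle I expect is the case analysis for strong support vertices in step (a): since the disjunctive rule allows the leaves sharing a strong support vertex to share their distance-two witnesses, one must check carefully that no single element of $C_{v_i}$ can simultaneously witness the domination of two different leaves. Step (b) and the closing algebraic manipulation are then comparatively routine once the leaf-distance hypothesis has been translated into a non-adjacency condition among support vertices.
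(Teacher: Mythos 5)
Your proof is correct, and it follows the same underlying idea as the paper---count two vertices of $S$ locally at each support vertex---but executes it by a genuinely different route. The paper splits the count: for the $k$ strong support vertices it uses the clique formed by the subdivision vertices around each one (via Theorem \ref{special}(iii)) to get the term $2k$, while for the weak support vertices it invokes Observation \ref{Shadow2.7} to force each such vertex of $T$, together with a partner, into $S$; it leaves implicit both the disjointness of these two contributions and the passage from ``number of weak support vertices'' to $|leaf(T)|-\sum_i(\deg(v_i)-1)$. You instead normalize $S$ into $\mathcal{M}$ via Lemma \ref{Lemma1} and run a single uniform argument, showing $|S\cap C_u|\geq 2$ for every support vertex $u$ (the leaf's forced witnesses in the strong case, and the pairing partner of $e_{u\ell}$ in the weak case), then explicitly verify that the cliques $C_u$ are pairwise disjoint under the distance hypothesis and finish with the algebraic step $L-\sum_i(\deg(v_i)-1)\leq w$ using the non-star assumption. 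This buys a self-contained proof of the stronger bound $\gamma_{pr}^{d}(M(T))\geq 2(k+w)$, twice the total number of support vertices, and avoids relying on Observation \ref{Shadow2.7}, which is phrased for PDD-sets of the tree $T$ rather than of $M(T)$. The only small point to add: Lemma \ref{Lemma1} is stated for $n\geq 5$, which is automatic here because every non-star tree satisfying the distance hypothesis has at least five vertices.
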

\begin{proof}
Let the set of leaves of the tree \( T \) be denoted by \( \text{leaf}(T) \). The set \( \text{leaf}(T) \) contains strong support vertices. Let us denote these vertices by the set\\ 
\[ SSV = \{v_1, \dots, v_k\}. \]In this case, the number of support vertices in the graph is given by  
\[
|\text{leaf}(T)| - \sum_{i=1}^{k}(\deg(v_i) - 1).
\]

The vertices labeled as \( e_{iv_j} \), where \( i = 1, \dots, \deg_T(v_j) \) in \( M(T) \) corresponding to the edges incident to each vertex \( v_j \), for \( j = 1, \dots, k \), form a complete graph with $deg_G(v_j)$ vertices.
Let \( S \) be a \textit{PDD-set} of \( M(T) \). Then, by Theorem \ref{special}(iii), it is easy to observe that \( |S| \geq 2k \).

If the graph contains a support vertex, then by Observation \ref{Shadow2.7}, the set \( S \) must include at least one support vertex. Moreover, to maintain the paired property, the neighbors of these support vertices must also be included in \( S \). Hence, the following inequality is obtained
$\gamma _{pr}^{d}(M(T))\geq 2k+2(\left\vert leaf(T)\right\vert
-\sum\limits_{i=1}^{k}(\deg (v_{i})-1)).
$
\end{proof}
\begin{corollary}
Let $T$ be a tree, distinct from a star graph, and containing   no support vertices. Then 
\[
\gamma _{pr}^{d}(M(T))\geq 2k.
\]
where $k$ is the number of the strong support vertices. Equality holds for $diam(T)=4$.
\end{corollary}
\begin{proof}
	From Theorem \ref{tree}, its known that 
	\[\gamma _{pr}^{d}(M(T))\geq 2k+2(\left\vert leaf(T)\right\vert
	-\sum\limits_{i=1}^{k}(\deg (v_{i})-1)).\] If any support vertex is not included by $T$, then \[\left\vert leaf(T)\right\vert
	=\sum\limits_{i=1}^{k}(\deg (v_{i})-1).\] Hence, \[
	\gamma _{pr}^{d}(M(T))\geq 2k.
	\]
	If $diam(T)=4$, then for each strong support vertex, by selecting the two subdivision vertices adjacent to it, every vertex of the middle graph is disjunctively dominated. Since these two vertices are adjacent, the paired condition is also satisfied. Therefore, for  $diam(T)=4$, we have obtained that $\gamma _{pr}^{d}(M(T))= 2k.$
	\end{proof}

\begin{corollary}
Let $T$ be a tree without strong support vertices, then 
\[
\gamma _{pr}^{d}(M(T))\geq 2(\left\vert leaf(T)\right\vert .
\]
If $diam(T)=4$, then equality holds.
\end{corollary}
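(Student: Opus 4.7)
The plan is to obtain this bound as a direct specialization of Theorem \ref{tree}, by reading the hypothesis ``tree without support vertices'' as ``tree without \emph{strong} support vertices'' (so that the set $SSV$ in the statement of Theorem \ref{tree} is empty and $k=0$). Any tree with at least one edge must possess a support vertex in the usual sense, so the natural reading that keeps the corollary non-vacuous is $SSV=\emptyset$; with this convention the corollary collapses into a substitution exercise rather than a fresh counting argument.

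Concretely, once $k=0$ is in place, the sum $\sum_{i=1}^{k}(\deg(v_i)-1)$ is empty and equals $0$. Substituting into the bound of Theorem \ref{tree} gives
\[
\gamma_{pd}^{d}(M(T)) \;\geq\; 2k + 2\bigl(|\mathrm{leaf}(T)|-0\bigr) \;=\; 2|\mathrm{leaf}(T)|,
\]
which is exactly the inequality asked for. I would also spend one sentence pointing out that the ``not a star'' condition of Theorem \ref{tree} is automatic under our hypothesis, because any star on at least three vertices contains a vertex adjacent to two or more leaves and is hence a strong support vertex.

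The main obstacle is checking that the pairwise leaf conditions assumed by Theorem \ref{tree}, namely $N_G(u)\cap N_G(w)=\emptyset$ and $d_G(u,w)>3$, remain available. The neighborhood condition is free: with no strong support vertices, distinct leaves have distinct (and necessarily distinct) support vertices, so their neighborhoods in $T$ are disjoint. The distance condition is the genuinely delicate point, since a tree without strong support vertices can still contain two leaves at distance exactly $3$ (for example, a path on four vertices). If the corollary is intended to inherit the distance hypothesis from Theorem \ref{tree}, then the substitution argument above is the complete proof; otherwise the proof would have to redo the counting in the $k=0$ case, using Observation \ref{Shadow2.7} to argue that each leaf's (unique) support vertex forces two vertices into any paired disjunctive dominating set of $M(T)$, and that these contributions, one pair per leaf, cannot be shared between distinct leaves precisely because each leaf has its own private support neighbor in $M(T)$.
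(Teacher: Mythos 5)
Your proposal is correct and matches the paper's intent: the paper states this corollary without proof, as a direct specialization of Theorem \ref{tree} with $k=0$ (no strong support vertices), which is exactly the substitution you perform. Your additional remarks --- that ``without support vertices'' must be read as ``without \emph{strong} support vertices'' to avoid vacuity, and that the leaf conditions $N_G(u)\cap N_G(w)=\emptyset$ and $d_G(u,w)>3$ must be inherited from the theorem's hypotheses --- are accurate observations about loose points in the paper's statement rather than gaps in your argument.
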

\begin{proof}
	From Theorem \ref{tree}, its known that 
	\[\gamma _{pr}^{d}(M(T))\geq 2k+2(\left\vert leaf(T)\right\vert
	-\sum\limits_{i=1}^{k}(\deg (v_{i})-1))\] where $k$ is number of strong support vertices. If any strong support vertices is not included by $T$, then  \[
	\gamma _{pr}^{d}(M(T))\geq 2(\left\vert leaf(T)\right\vert
	\] is yield.
	
	If $diam(T)=4$, then for each leaf vertex, by selecting the a subdivision vertex adjacent to it, every vertex of the middle graph is disjunctively dominated. To satisfy the paired condition, a vertex adjacent to each selected vertex must be included in the \textit{PDD-set}. Therefore, for  $diam(T)=4$, we have obtained that $\gamma _{pr}^{d}(M(T))= 2(\left\vert leaf(T)\right\vert).$
\end{proof}

\begin{proposition}
	Let $G$ and $H$ be two graphs and $G+H$ be join of them. Then
	\[
	\gamma_{pr}^d(M(G+H)) = 2.
	\]
\end{proposition}

\begin{proof}
	Let $u\in V(G)$ and choose two distinct vertices $h_1,h_2\in V(H)$. 
	In the join $G+H$, the edges $uh_1$ and $uh_2$ exist, and let $x_1$ and $x_2$ be their corresponding subdivision vertices in $M(G+H)$. Clearly, $x_1$ and $x_2$ are adjacent in $M(G+H)$ since the edges $uh_1$ and $uh_2$ share the endpoint $u$. Moreover, using that every vertex of $G$ is adjacent to every vertex of $H$ in the join, it is straigthforward to verify that every vertex of $M(G+H)$ is either adjacent to $x_1$ or $x_2$, or at distance two from both of them.  Hence by Theorem \ref{genelteomyiddle}, we obtain 
	$\gamma_{pr}^d(M(G+H))=2$.
\end{proof}

%
%
%
%

\end{document}